\newtheorem{thm}{Theorem}
\newtheorem{lemma}{Lemma}
\newtheorem{prop}{Proposition}
\newtheorem{defn}{Definition}
\newtheorem{cor}{Corollary}
\newcommand{\bx} {\boldsymbol{x}}
\newcommand{\by} {\boldsymbol{y}}
\newcommand{\bH} {\boldsymbol{H}}
\newcommand{\bI} {\boldsymbol{I}}
\newcommand{\bR} {\boldsymbol{R}}
\newcommand{\bU} {\boldsymbol{U}}
\newcommand{\bu} {\boldsymbol{u}}
\newcommand{\bh} {\boldsymbol{h}}
\newcommand{\bM} {\boldsymbol{M}}
\newcommand{\bLam} {\boldsymbol{\Lambda}}
\newcommand{\gl}{\lambda}
\newcommand{\bxi} {\boldsymbol{\xi}}
\def\bal#1\eal{\begin{align}#1\end{align}}
\newcommand{\bp} {\begin{proof}}
\newcommand{\ep} {\end{proof}}
\newcommand{{\bRF}} {\right\}}
\begin{document}

\title{The Capacity of Gaussian MIMO Channels Under Total and Per-Antenna Power Constraints}

\author{Sergey Loyka

\vspace*{-1\baselineskip}

\thanks{The material in this paper was presented in part at the IEEE International Symposium on Information Theory, Barcelona, Spain, July 2016.}

\thanks{S. Loyka is with the School of Electrical Engineering and Computer Science, University of Ottawa, Ontario, Canada, e-mail: sergey.loyka@ieee.org}

}

\maketitle

\begin{abstract}
The capacity of a fixed Gaussian multiple-input multiple-output (MIMO) channel and the optimal transmission strategy under the total power (TP) constraint and full channel state information are well-known. This problem remains open in the general case under individual per-antenna (PA) power constraints, while some special cases have been solved. These include a full-rank solution for the MIMO channel and a general solution for the multiple-input single-output (MISO) channel. In this paper, the fixed Gaussian MISO channel is considered and its capacity as well as optimal transmission strategies are determined in a closed form under the joint total and per-antenna power constraints in the general case. In particular, the optimal strategy is hybrid and includes two parts: first is equal-gain transmission and second is maximum-ratio transmission, which are responsible for the PA and TP constraints respectively. The optimal beamforming vector is given in a closed-form and an accurate yet simple approximation to the capacity is proposed. Finally, the above results are extended to the MIMO case by establishing the ergodic capacity of fading MIMO channels under the joint power constraints when the fading distribution is right unitary-invariant (of which i.i.d. and semi-correlated Rayleigh fading are special cases). Unlike the fixed MISO case, the optimal signaling is shown to be isotropic in this case.
\end{abstract}

\begin{IEEEkeywords}
MIMO, channel capacity, power constraint.
\end{IEEEkeywords}

\section{Introduction}

The capacity of a fixed multiple-input multiple-output (MIMO) Gaussian channel under the total power (TP) constraint and full channel state information (CSI) at both ends is well-known as well as the optimal transmission strategy to achieve it \cite{Cover-06}-\cite{Biglieri-98}: the optimal strategy is Gaussian signaling over the channel eigenmodes with power allocation given by the water-filling (WF) algorithm. In the special case of multiple-input single-output (MISO) channel, this reduces to the rank-1 signalling, i.e. beamforming, where the beamforming vector is proportional to the channel vector (i.e. stronger channels get more power), which mimics the maximum ratio combining (MRC) in diversity reception systems \cite{Barry-04}\cite{Tse-05}, which we term here "maximum ratio transmission" (MRT). Recently, this problem was considered under individual per-antenna (PA) power constraints \cite{Yu-07}-\cite{Vu-11}, which is motivated by the distributed design of active antenna arrays where each antenna has its own RF amplifier with limited power (as opposed to a common amplifier and a passive beamforming network in the case of TP constraint\footnote{The following further considerations make the TP constraint important: (i) for battery-operated devices, the TP determines the battery life; (ii) the TP constraint is important when a power/energy supply is significantly limited; (iii) the growing importance of "green" communications makes the TP important since it is the TP rather than the PA power that determines the carbon footprint of the system.}), so that powers of different antennas cannot be traded off with each other. The optimal transmission strategy for a fixed channel was established in \cite{Vu-11}, which corresponds to beamforming (i.e. rank-1 transmission) with uniform amplitude distribution across antennas and where the beamforming vector compensates for channel phase differences so that all transmitted  signals are coherently combined at the receiver. This mimics the well-known equal gain combining (EGC) in a diversity-reception system. Hence, we term this strategy "equal gain transmission" (EGT) here. A fixed multiple-input multiple-output (MIMO) Gaussian channel under PA constraints was considered in \cite{Vu-11b} and \cite{Tuninetti-14}, where a numerical algorithm to evaluate an optimal Tx covariance was developed based on a partial analytical solution \cite{Vu-11b} and  a closed-form full-rank solution was obtained \cite{Tuninetti-14}, while the general solution remains illusive. This is in stark contrast to the capacity under the TP constraint, for which the general solution is well-known for this channel. The capacity of the ergodic-fading MISO channel under the long-term average PA constraint and full CSI at both ends was established in \cite{Maamari-14}.

Single-user PA-constrained results were extended to multi-user scenarios in \cite{Park-10} and \cite{Zhu-12}, where a precoder was developed that achieves a 2-user MISO Gaussian broadcast channel (BC) capacity \cite{Park-10} and an iterative numerical algorithm was developed to obtain optimal covariance matrices to maximize the sum-capacity of Gaussian MIMO multiple-access (MAC) channel \cite{Zhu-12}, for which no closed-form solution is known.

One may further consider a hybrid design of a Tx antenna array where each antenna has its own power amplifier and yet some power can be traded-off between antennas (corresponding to a common beamforming network) under the limited total power (e.g. due to the limitation of a power supply unit). This implies individual (PA) as well as total (TP) power constraints. Ergodic-fading MIMO channels were considered in \cite{Khoshnevisan-12} under long-term TP and short-term PA constraints and a sub-optimal signalling transmission strategy was proposed. An optimal strategy to achieve the ergodic capacity under the above constraints remains unknown. A fixed (non-fading) MISO channel was considered in \cite{Cao-15} under full CSI at both ends and joint TP and PA constrains. It was shown that beamforming is still an optimal strategy. A closed-form solution was established in the case of 2 Tx antennas only and the general case remains an open problem.

The present paper provides a closed-form solution to this open problem, which is based on Karush-Kuhn-Tucker (KKT) optimality conditions for the respective optimization problem. In particular, we show that the optimal strategy is hybrid and consists of 2 parts: 1st part, which includes antennas with stronger channel gains and for which PA constraints are active, performs EGT (when PA constraints are the same for all antennas) while 2nd part, which includes antennas with weaker channel gains and for which PA constraints are inactive, performs MRT. This mimics the classical equal gain and maximum ratio combining (EGC and MRC) strategies of diversity reception. Amplitude distribution across antennas as well as the number of active PA constraints are explicitly determined. Sufficient and necessary conditions for the optimality of the MRT and the EGT are given. In particular, the MRT is optimal when channel gain variation among antennas is not too large and the EGT is optimal for sufficiently large total power constraint.

Based on the fact that the capacity under the joint (PA+TP) constraints is upper bounded by the capacities under the individual (either PA or TP) constraints, a compact yet accurate approximation to the capacity is proposed.

While closed-form solutions for the optimal signaling and the capacity of the fixed Gaussian MISO channel under the joint power constraints are established in sections \ref{sec.C.PA.TP} and \ref{sec.diff.PA.const}, one may wonder whether they can be extended to the MIMO case and whether fading can be included as well, which is important from the practical perspective for modern wireless systems. Section \ref{sec.fading.MIMO} partially addresses this question by considering a class of fading MIMO channels and establishing its ergodic capacity  under the joint power constraints when the fading distribution is right unitary-invariant (see section V for details), of which i.i.d. and semi-correlated Rayleigh fading are special cases. Unlike the fixed MISO case, the optimal signaling is shown to be isotropic in this case. This extends the respective result in \cite{Telatar} established under the TP constraint and i.i.d. Rayleigh fading to the joint PA and TP constraints as well as to the class of right unitary-invariant fading distributions.

\textit{Notations}: bold lower-case letters denote column vectors, $\bh = [h_1,h_2,..,h_m]^T$, where $T$ is the transposition, while bold capital denote matrices; $\bR^+$ is the Hermitian conjugation of $\bR$; $r_{ii}$ denotes the $i$-th diagonal entry of $\bR$; $\lfloor x \rfloor$ is the integer part while $(x)_+=\max[0,x]$ is the positive part of $x$; $\nabla_R$ is the derivative with respect to $\bR$; $\bR \ge 0$ means that $\bR$ is positive semi-definite; $|\bh|_p = (\sum_i |h_i|^p)^{1/p}$ is the $l_p$-norm of vector $\bh$ and $|\bh|=|\bh|_2$ is the $l_2$ norm.

\section{Channel Model and Capacity}
\label{sec.Channel Model}

Discrete-time model of a fixed Gaussian MISO channel can be put into the following form:
\bal
y = \bh^+\bx +\xi
\eal
where $y, \bx, \xi$ and $\bh$ are the received and transmitted signals, noise and channel respectively; $h_i^*$ is $i$-th channel gain (between $i$-th Tx antenna and the Rx). Without loss of generality, we order the channel gains, unless indicated otherwise, as follows: $|h_1|\ge |h_2|\ge .. |h_m|>0$, and $m$ is the number of transmit antennas. The noise is assumed to be Gaussian with zero mean and unit variance, so that the SNR equals to the signal power. Complex-valued channel model is assumed throughout the paper, with full channel state information available both at the transmitter and the receiver. Gaussian signaling is known to be optimal in this setting \cite{Cover-06}-\cite{Biglieri-98} so that finding the channel capacity $C$ amounts to finding an optimal transmit covariance matrix $\bR$:
\bal
\label{eq.C.def}
C = \max_{\bR \in S_R} \ln(1+\bh^+\bR\bh)
\eal
where $S_R$ is the constraint set. In the case of the TP constraint, it takes the form
\bal
S_R=\{\bR: \bR\ge 0, tr\bR \le P_T\},
\eal
where $P_T$ is the maximum total Tx power, and the MRT is optimal \cite{Tse-05} so that the optimal covariance $\bR^*$ is
\bal
\bR^* = P_T \bh\bh^+/|\bh|_2^2
\eal
and the capacity is
\bal
\label{eq.C.MRT}
C_{MRT}= \ln(1+P_T |\bh|_2^2)
\eal
Under the PA constraints,
\bal
S_R=\{\bR: \bR\ge 0, r_{ii} \le P\},
\eal
where $r_{ii}$ is $i$-th diagonal entry of $\bR$ (the Tx power of $i$-th antenna), $P$ is the maximum PA power, and the EGT is optimal \cite{Vu-11} so that the optimal covariance $\bR^*$ is
\bal
\bR^* = P \bu\bu^+,
\eal
where the entries of the beamforming vector $\bu$ are $u_i=e^{j\phi_i}$, $\phi_i$ is the phase of $h_i$, and the capacity is
\bal
\label{eq.C.EGT}
C_{EGT}= \ln(1+P|\bh|_1^2)
\eal
Note from \eqref{eq.C.MRT} and \eqref{eq.C.EGT} that it is the $l_1$ norm of the channel $\bh$ that determines the capacity under the PA constraint while the $l_2$ norm does so under the total power constraint. In the next section, we will see how this observation extends to the case of the joint PA and TP constraints.

\section{The Capacity Under the Joint Constraints}
\label{sec.C.PA.TP}

Following the same line of argument as for the total power constraint \cite{Cover-06}-\cite{Biglieri-98}, the channel capacity $C$ under the joint PA and TP constraints is as in \eqref{eq.C.def} where $S_R$ is as follows:
\bal
\label{eq.SR.PA.TP}
S_R=\{\bR: \bR\ge 0, tr\bR \le P_T, r_{ii}\le P\}
\eal
and $P_T, P$ are the maximum total and per-antenna powers. This is equivalent to maximizing the Rx SNR:
\bal
\max_{\bR} \bh^+\bR\bh\ \ \textup{s.t.}\ \ \bR \in S_R
\eal
The following Theorem gives a closed-form solution to this open problem.

\begin{thm}
\label{thm.C}
The MISO channel capacity in \eqref{eq.C.def} under the per-antenna and total power constraints in \eqref{eq.SR.PA.TP} is achieved by the beamforming with the following input covariance matrix
\bal
\label{eq.R*}
\bR^* = P^* \bu\bu^+
\eal
where $P^*=\min(P_T, mP)$ and $\bu$ is a unitary (beamforming) vector of the form:
\bal
\label{eq.ui*}
u_i &= a_i e^{j\phi_{i}}
\eal
where $\phi_i$ is the phase of $h_i$ and $a_i$ represents amplitude distribution across antennas:
\bal
a_i =
\begin{cases}
c_1,  &i=1..k\\
c_2 |h_i|, &i=k+1..m
\end{cases}
\eal
and
\bal
c_1 = \frac{1}{\sqrt{m^*}},\ c_2 = \frac{\sqrt{1-k/m^*}}{|\bh_{k+1}^m|_2}
\eal
$m^* = P^*/P$, $\bh_{k+1}^m = [h_{k+1}...h_m]^T$ is the truncated channel vector, and $k$ is the number of active per-antenna power constraints, $0\le k\le \lfloor m^*\rfloor$, determined as the least solution of the following inequality
\bal
\label{eq.hth}
|h_{k+1}|\le h_{th} = \frac{|\bh_{k+1}^m|_2}{\sqrt{m^*-k}}
\eal
if $P_T < mP$ and $k=m$ otherwise. The capacity is
\bal
\label{eq.C}
C = \ln(1+\gamma^*)
\eal
where $\gamma^* = \bh^+\bR^*\bh$ is the maximum Rx SNR under the TP and PA constraints,
\bal
\label{eq.gamma*}
\gamma^* = P^*(c_1|\bh_1^k|_1 + c_2|\bh_{k+1}^m|_2^2)^2
\eal
where the 2nd term is absent if $k=m$.
\end{thm}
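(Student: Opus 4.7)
The plan is to first reduce the matrix optimization in \eqref{eq.C.def} with $S_R$ as in \eqref{eq.SR.PA.TP} to a real scalar problem in the amplitudes $a_i=|u_i|$, then analyse it via the KKT conditions. Since the objective $\bh^+\bR\bh=\mathrm{tr}(\bR\bh\bh^+)$ is linear in $\bR$ and $\bh\bh^+$ has rank one, a rank-one optimizer $\bR^{*}=P^{*}\bu\bu^{+}$ with $|\bu|=1$ can be taken (as in the TP-only case, and established for this joint constraint in \cite{Cao-15}). The effective power $P^{*}$ is the largest scalar in $(0,P_T]$ compatible with a unit-norm $\bu$ obeying $P^{*}|u_i|^{2}\le P$; summing over $i$ gives $P^{*}\le mP$, so $P^{*}=\min(P_T,mP)$, and when $P_T\ge mP$ the PA bounds force $|u_i|^{2}=1/m$ for every $i$, i.e.\ pure EGT with $k=m$, so from here I focus on $P_T<mP$.

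I then align the phases of the $u_i$ with those of the $h_i^{*}$, which is optimal by the triangle inequality applied to $\bh^+\bu=\sum_i|h_i|a_i e^{j(\theta_i-\phi_i)}$, reducing the problem to
\bal
\max \sum_{i=1}^{m}|h_i|\,a_i\ \ \textup{s.t.}\ \ \sum_{i=1}^{m}a_i^{2}=1,\ 0\le a_i\le 1/\sqrt{m^{*}}.
\eal
This is concave maximization over a convex set, so the KKT conditions are necessary and sufficient. Introducing a multiplier $2\mu$ for the equality and $\lambda_i\ge 0$ for the PA upper bounds, stationarity reads $|h_i|=2\mu a_i+\lambda_i$: where the PA bound is inactive, $a_i=|h_i|/(2\mu)$ is proportional to $|h_i|$ (MRT-like); where it is active, $a_i=1/\sqrt{m^{*}}$ and $|h_i|\ge 2\mu/\sqrt{m^{*}}$. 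Because $|h_1|\ge\cdots\ge|h_m|$, the active set must be an initial segment $\{1,\dots,k\}$, yielding the two-piece structure of the theorem with $c_1=1/\sqrt{m^{*}}$ and $c_2=1/(2\mu)$.

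The remaining task is to pin down $k$. The unit-norm condition gives $k/m^{*}+c_2^{2}|\bh_{k+1}^{m}|_2^{2}=1$, hence $c_2=\sqrt{1-k/m^{*}}/|\bh_{k+1}^{m}|_2$; requiring $a_{k+1}=c_2|h_{k+1}|\le 1/\sqrt{m^{*}}$ rearranges exactly to \eqref{eq.hth}, while $|h_k|\ge|h_{k+1}|$ shows that any smaller index would violate the PA bound at antenna $k$, so $k$ is characterised as the least nonnegative integer satisfying \eqref{eq.hth}. Substituting back into $\bh^{+}\bu$ produces $\gamma^{*}$ as in \eqref{eq.gamma*} and hence the capacity \eqref{eq.C}. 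The step I expect to be most delicate is the clean identification of the active set with an initial block of indices (which hinges on the monotone ordering of the $|h_i|$ together with $a_i=|h_i|/(2\mu)$ outside the block) and confirming that a valid $k\le\lfloor m^{*}\rfloor$ always exists; for the latter, $k=\lfloor m^{*}\rfloor$ always works because the right-hand side of \eqref{eq.hth} then dominates $|\bh_{k+1}^{m}|_2$ and hence $|h_{k+1}|$, while the boundary cases $k=0$ (pure MRT) and $k=m$ (pure EGT) are covered by the ``least solution'' convention and by $P^{*}=\min(P_T,mP)$.
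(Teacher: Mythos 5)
Your proposal is essentially correct, but it reaches the result by a genuinely different route than the paper. You first reduce the matrix problem to a scalar amplitude-allocation problem: you take rank-one optimality as known (citing \cite{Cao-15}), align phases by the triangle inequality, and then run KKT on $\max\sum_i|h_i|a_i$ subject to norm and box constraints. The paper instead stays entirely in the matrix domain: it writes the full KKT system with a PSD multiplier $\bM$, proves rank-one optimality from scratch via a rank argument ($\bh\bh^+ + \bM = \gl\bI + \bLam > 0$ forces $r(\bM)=m-1$, hence $r(\bR)=1$ by complementary slackness), extracts the phase alignment from $\bM\bu=0$, and must separately verify dual feasibility $\bM\ge 0$ via Cauchy--Schwarz at the end. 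Your scalar reduction is cleaner and avoids that last verification entirely, at the price of importing the beamforming-optimality result from \cite{Cao-15}; the paper's matrix route is self-contained and re-derives that fact. Your treatment of the active set (initial segment by monotonicity of $|h_i|$, least $k$ satisfying \eqref{eq.hth}) is the same in substance as the paper's Lemmas 1--2 and Proposition: your observation that choosing $k-1$ violates the PA bound at antenna $k$ is exactly the paper's argument that the negation of \eqref{eq.h-k+1.2} at $k-1$ is equivalent to \eqref{eq.h1-k.2} at $k$.

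Two points need tightening. First, the heuristic ``the objective is linear in $\bR$ and $\bh\bh^+$ has rank one, so a rank-one optimizer can be taken'' is not a proof: a linear objective is maximized at an extreme point of $S_R$, but with the PA constraints added the extreme points of $S_R$ need not be rank one, so you genuinely need either the citation to \cite{Cao-15} or the paper's rank argument on $\bM$. Second, your scalar problem as stated, with the equality constraint $\sum_i a_i^2 = 1$, is \emph{not} a concave maximization over a convex set (the sphere is not convex), so ``KKT necessary and sufficient'' does not apply directly; relax to $\sum_i a_i^2 \le 1$, note the optimum saturates it since the objective is increasing in each $a_i$, and then the convexity claim and KKT sufficiency are valid. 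Both fixes are routine.
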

\begin{proof}
see Appendix.
\end{proof}

Note from \eqref{eq.ui*} that the beamforming vector always compensates for channel phases so that the transmitted signals are combined coherently at the receiver, while the amplitude distribution across Tx antennas depends on the number of active PA constraints: amplitudes are always the same for those antennas for which PA constraints are active (which represent stronger channels) and they are proportional to channel gain when for inactive PA constraints (weaker channels). In accordance with this,
\eqref{eq.gamma*} has two terms: 1st term $c_1|\bh_1^k|_1$ represents the gain due to the equal gain transmission (EGT, $|u_i|=c_1$) for active PA constraints while 2nd one $c_2|\bh_{k+1}^m|_2^2$ - due to the maximum ratio transmission (MRT, $|u_i| = c_2|h_i|$) for inactive PA constraints, which mimic the equal gain combining (EGC) and maximum ratio combining (MRC) in the case of diversity reception systems. These two terms are represented by $l_1$ and $l_2$ norms respectively, which mimic the respective observation for \eqref{eq.C.EGT} and \eqref{eq.C.MRT}.

Eq. \eqref{eq.hth} facilitates an algorithmic solution to find the number $k$ of active PA constraints and hence the threshold $h_{th}$: the inequality is verified for $k$ in increasing order, starting from $k=0$, and the algorithm stops when 1st solution is found (this will automatically be the least solution, as required).

The following Corollary establishes conditions for the optimality of the MRT, which corresponds to $k=0$.

\begin{cor}
All PA constraints are inactive and thus maximum ratio transmission is the optimal strategy if and only if
\bal
\label{eq.cor.1}
|h_1| \le |\bh|_2 \sqrt{P/P_T}
\eal
\end{cor}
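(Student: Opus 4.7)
The strategy is to specialize the characterization of $k$ in Theorem~\ref{thm.C} to $k=0$ and translate the resulting inequality into the form claimed. By the theorem, $k=0$ means that no per-antenna constraint is active, and the covariance formula then collapses to $\bR^* = P_T\,\bh\bh^+/|\bh|_2^2$ (since the $c_1$-block is empty and $c_2 = 1/|\bh|_2$, giving $u_i = h_i/|\bh|_2$), which is exactly the MRT covariance. So it suffices to characterize precisely when $k=0$.

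The main case is $P_T < mP$, in which $P^* = P_T$, $m^* = P_T/P$, and the selection rule for $k$ is given by \eqref{eq.hth}. Evaluating that inequality at $k=0$ yields $|h_1| \le |\bh_1^m|_2/\sqrt{m^*} = |\bh|_2\sqrt{P/P_T}$. Since $k$ is defined as the least non-negative integer satisfying \eqref{eq.hth} and $0$ is the smallest possible value, the condition $k=0$ is equivalent to this inequality in this regime.

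To complete the equivalence I would reconcile with the regime $P_T \ge mP$, where the theorem sets $k=m$ (and hence $k \ne 0$ for $m \ge 1$). The corollary's inequality in fact forces $P_T \le mP$: combining $|h_1|\le|\bh|_2\sqrt{P/P_T}$ with the elementary bound $|\bh|_2\le\sqrt{m}\,|h_1|$ (which follows from $|h_i|\le|h_1|$ for every $i$) gives $P_T \le mP$. Thus the corollary's hypothesis automatically places us in the regime handled above; the single boundary point $P_T = mP$ with all $|h_i|$ equal is consistent, since there MRT and EGT coincide.

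The only conceptual step is recognizing that \eqref{eq.hth} at $k=0$ is precisely the stated inequality after substituting $m^* = P_T/P$; the rest is algebraic bookkeeping together with a consistency check across the two regimes of Theorem~\ref{thm.C}. I do not anticipate a substantive obstacle — the corollary is essentially a direct reading of the theorem at its smallest allowed $k$.
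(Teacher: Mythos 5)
Your proposal is correct and follows essentially the same route as the paper, which simply reads Theorem~\ref{thm.C} at $k=0$ (with necessity coming from the fact that $k$ is characterized as the least solution of \eqref{eq.hth}, itself rooted in the necessity of the KKT conditions). Your additional check that the inequality \eqref{eq.cor.1} forces $P_T \le mP$, so the $k=m$ regime cannot interfere, is a sound piece of bookkeeping that the paper leaves implicit.
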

\begin{proof}
Follows directly from Theorem \ref{thm.C} by using $k=0$. The necessary part is due to the necessity of the KKT conditions for optimality.
\end{proof}

Note that this limits channel gain variance among antennas. In particular, it always holds if all channel gains are the same. It also implies that at least 1 PA constraint is active if
\bal
|h_1| > |\bh|_2 \sqrt{P/P_T}
\eal

In a similar way, one obtains a condition for the optimality of the EGT.

\begin{cor}
All PA constraints are active and thus the equal gain transmission is the optimal strategy if and only if
\bal
\label{eq.cor.2}
P_T \ge m P
\eal
\end{cor}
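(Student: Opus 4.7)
The plan is to read the corollary directly off Theorem~\ref{thm.C} by tracking the definition of $P^*$, $m^*$ and the allowed range of $k$.

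For sufficiency, I would assume $P_T \ge mP$. Then $P^* = \min(P_T, mP) = mP$ and $m^* = P^*/P = m$. Theorem~\ref{thm.C} says that in this regime we are in the ``otherwise'' branch, so $k = m$, meaning every PA constraint is active. To confirm that this is genuinely EGT, I would substitute into \eqref{eq.R*}--\eqref{eq.gamma*}: with $k = m$ the MRT term is absent, $c_1 = 1/\sqrt{m}$, and the diagonal entries are $r_{ii}^* = P^* |u_i|^2 = mP \cdot (1/m) = P$, so every per-antenna constraint is tight. The SNR becomes $\gamma^* = mP (c_1 |\bh|_1)^2 = P|\bh|_1^2$, recovering $C_{EGT}$ in \eqref{eq.C.EGT}.

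For necessity, I would argue by contrapositive. Suppose $P_T < mP$. Then $m^* = P_T/P < m$, and Theorem~\ref{thm.C} constrains the number of active PA constraints to $0 \le k \le \lfloor m^* \rfloor \le m-1$, so at least one PA constraint is inactive and the optimal strategy is not pure EGT. Equivalently, if all PA constraints are active (i.e.\ $k = m$), then $P_T \ge mP$.

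There is essentially no obstacle, since Theorem~\ref{thm.C} already encodes the threshold $P^* = \min(P_T, mP)$. The only point worth flagging is why the strict bound $k \le \lfloor m^* \rfloor$ applies when $P_T < mP$: it comes from needing $c_2$ in \eqref{eq.hth} to be well defined (i.e.\ $m^* - k > 0$) whenever the MRT component is non-empty, which is exactly why Theorem~\ref{thm.C} must handle the case $P_T \ge mP$ as a separate branch. This observation is what makes the boundary $P_T = mP$ the sharp threshold claimed in \eqref{eq.cor.2}.
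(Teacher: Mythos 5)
Your proposal is correct and follows essentially the same route as the paper, which states (via the remark preceding the corollary and by analogy with Corollary 1) that the result follows directly from Theorem~\ref{thm.C}: the ``otherwise'' branch gives $k=m$ when $P_T\ge mP$, while $k\le\lfloor m^*\rfloor\le m-1$ forces an inactive PA constraint when $P_T<mP$. Your explicit verification that $r_{ii}^*=P$ and $\gamma^*=P|\bh|_1^2$ in the sufficiency direction is a nice sanity check but not a departure from the paper's argument.
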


When the TP constraint is not active, i.e. $P_T\ge mP$ and hence $k=m$, Theorem 1 reduces to the respective result in \cite{Vu-11} under the identical PA constraints.

\subsection{Examples}

To illustrate the optimal solution, we consider the following representative example: $\bh = [3,1,0.5,0.1]^T$. Note that this example also applies to complex-valued channel gains since the beamforming vector is always adjusted to compensate for the channel phases and hence they do not affect the capacity or the amplitude distribution, which will stay the same for the more general case of
\bal
\bh = [3e^{j\phi_1},e^{j\phi_2},0.5e^{j\phi_3},0.1e^{j\phi_4}]^T
\eal
where $\phi_1...\phi_4$ are (arbitrary) phases, which affect the beamforming vector phases as in \eqref{eq.ui*}. Fig. 1 shows the capacity under the total and joint power constraints as the function of the total power $P_T$ when $P=1$. As the total power increases, more and more PA constraints become active, starting with antennas corresponding to strongest channels. Note that the MRT is optimal ($k=0$) if the total power is not too large:
\bal
\label{eq.MRT.opt}
P_T \le P |\bh|^2/|h_1|^2 \approx 1.1
\eal
while the EGT is optimal if
\bal
\label{eq.EGT.opt}
P_T \ge m P =4
\eal

Fig. 2 shows the amplitude distribution for the scenario in Fig. 1 under the joint PA+TP constraints. While weak channels get less power at the beginning (when the MRT is optimal), it gradually increases as the strongest channels reach their individual power constrains until eventually all channels have the same power (when the EGT is optimal). Note that while the amplitudes $a_1$ and $a_4$ of the strongest and weakest channels are monotonically decreasing/increasing, the amplitudes $a_2, a_3$ of intermediate  channels are not monotonic in $P_T$, increasing first until they reach the stronger level and then decreasing.

\begin{figure}[t]
\label{Fig.1}
\centerline{\includegraphics[width=3in]{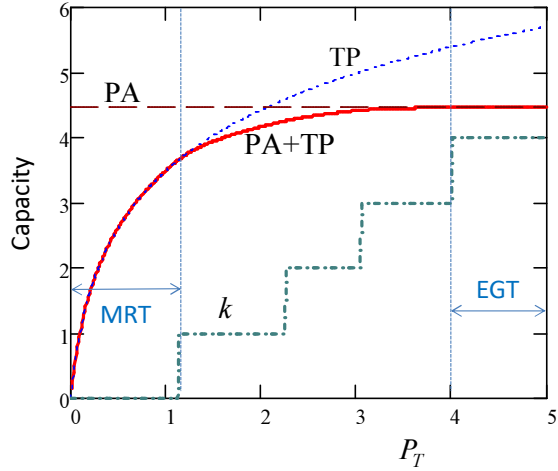}}
\caption{The capacity of MISO channel under the PA, TP and joint PA+TP constraints and the number of active PA constraints $k$ vs. total power $P_T$; $P=1$, $\bh = [3,1,0.5,0.1]^T$.}
\end{figure}

\begin{figure}[t]
\label{Fig.2}
\centerline{\includegraphics[width=3in]{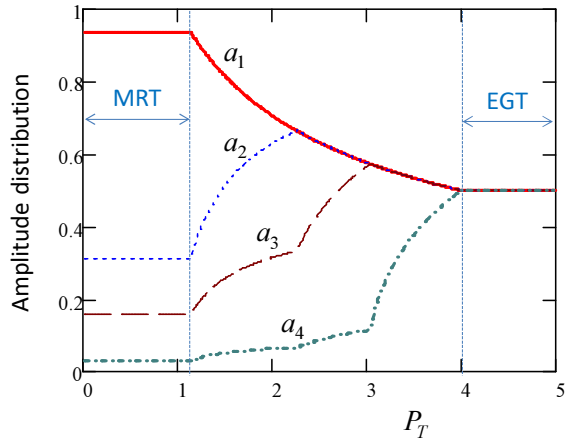}}
\caption{The optimal amplitude distribution under the joint power constraints for the scenario in Fig. 1.}
\end{figure}

In general, the capacity under the joint PA+TP constraints can be upper-bounded by the EGT and MRT capacities under the PA and TP constraints respectively:
\bal
\label{eq.C.ineq}
C \le \min(C_{MRT}, C_{EGT})
\eal
where $C_{MRT}, C_{EGT}$ are as in \eqref{eq.C.MRT}, \eqref{eq.C.EGT}, and the upper bound is tight everywhere except in the transition region, so one can approximate the capacity $C$ as
\bal
\label{eq.C.approx}
C \approx \min(C_{MRT}, C_{EGT})
\eal
It is straightforward to show that \eqref{eq.C.ineq} and \eqref{eq.C.approx} hold with strict equality under \eqref{eq.cor.1} or \eqref{eq.cor.2} for any $\bh$, or if $|h_1|/|h_m|=1$ for any $P_T$ and $P$. The approximation is sufficiently accurate if the variance in the channel gains is not large, i.e. if $|h_1|/|h_m|$ is not too large, as the following example demonstrates in Fig. 3, where $\bh = [4,3,2.5,2]^T$. Fig. 4 shows the respective amplitude distribution. Notice that the variance of the amplitude distribution is smaller than that in Fig. 2, since the variance in the channel gains is smaller as well, and that the range of optimality of the MRT is larger while the range of optimality of the EGT is exactly the same as in Fig. 2. In fact, it follows from \eqref{eq.MRT.opt} and \eqref{eq.EGT.opt} that while the range of optimality of the MRT depends on the channel, that of the EGT does not.

\begin{figure}[t]
\label{Fig.3}
\centerline{\includegraphics[width=3in]{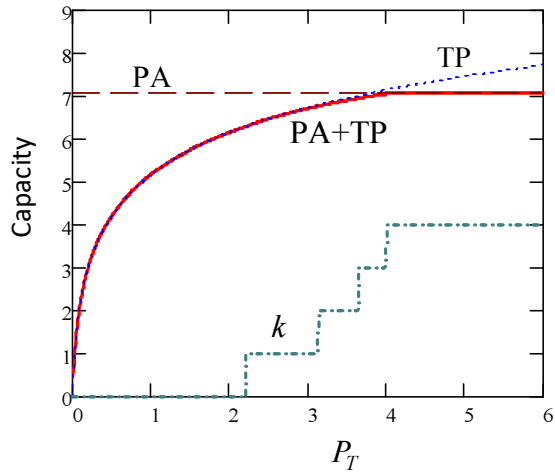}}
\caption{The capacity of MISO channel under the PA, TP and joint PA+TP constraints and the number of active PA constraints $k$ vs. total power $P_T$; $P=1$, $\bh = [4,3,2.5,2]^T$. Note that the approximation in \eqref{eq.C.approx} is accurate over the whole range of $P_T$.}
\end{figure}

\begin{figure}[t]
\label{Fig.4}
\centerline{\includegraphics[width=3in]{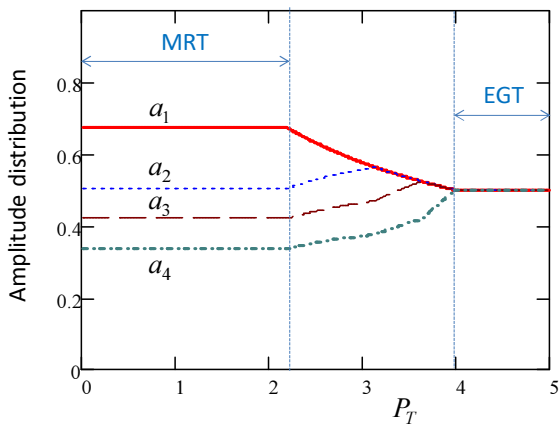}}
\caption{The optimal amplitude distribution under the joint power constraints for the scenario in Fig. 3.}
\end{figure}

\section{Different PA Constraints}
\label{sec.diff.PA.const}

In a similar way, one may wish to consider a more general case where individual antennas have different power constraints, so that the constraint set is
\bal
\label{eq.SR.Pi}
S_R=\{\bR: \bR\ge 0, tr\bR \le P_T, r_{ii}\le P_i\}
\eal
The channel capacity under these constraints is given in the following.

\begin{thm}
\label{thm.C.Pi}
The MISO channel capacity in \eqref{eq.C.def} under the per-antenna and total power constraints in \eqref{eq.SR.Pi} is achieved by the beamforming with the input covariance matrix as in \eqref{eq.R*} and \eqref{eq.ui*} where
\bal
a_i =
\begin{cases}
c_{1i},  &i=1..k\\
c_2 |h_i|, &i=k+1..m
\end{cases}
\eal
\bal
c_{1i} = \sqrt{\frac{P_i}{P^*}},\ c_2 = \frac{\sqrt{1-k/m^*}}{|\bh_{k+1}^m|_2}
\eal
and $P^*=\min(P_T, \sum_{i=1}^m P_i)$, $m^* = P^*/P_0$, $P_0=\frac{1}{k}\sum_{i=1}^k P_i$ is the average power of the active PA constraints, $k$ is the number of active PA constraints, determined as the least solution of the following inequality
\bal
\frac{|h_{k+1}|}{\sqrt{P_{k+1}}} \le \frac{|\bh_{k+1}^m|_2}{\sqrt{P_T-\sum_{i=1}^k P_i}}
\eal
if $P_T < \sum_{i=1}^m P_i$ and $k=m$ otherwise, where channel gains $\{h_i\}$ are ordered in such a way that $\{|h_{i}|/\sqrt{P_{i}}\}$ are in decreasing order. The capacity is as in \eqref{eq.C} and the optimal SNR is
\bal
\label{eq.gamma*2}
\gamma^* = P^*\left(\sum_{i=1}^k c_{1i} |h_i| + c_2|\bh_{k+1}^m|_2^2\right)^2
\eal
\end{thm}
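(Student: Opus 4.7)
The plan is to follow the KKT route used for Theorem \ref{thm.C}, modifying only those steps that depend on the per-antenna limits being identical. I would begin from the Lagrangian
\[
L = \ln(1+\bh^+\bR\bh) - \lambda(tr\bR - P_T) - \sum_i \mu_i (r_{ii} - P_i) + tr(\bM\bR),
\]
with $\bM \ge 0$ and $\lambda,\mu_i \ge 0$. Stationarity gives $\bM = \lambda\bI + \bD - \bh\bh^+/(1+\bh^+\bR^*\bh)$ with $\bD=\mathrm{diag}(\mu_i)$; because $\lambda\bI+\bD$ is positive definite at the optimum and $\bM$ is its rank-one perturbation, the null space of $\bM$ is at most one-dimensional, and $\bM\bR^*=\bo$ forces $\bR^*=P^*\bu\bu^+$ with $\|\bu\|=1$. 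The identification $P^*=\min(P_T,\sum_iP_i)$ follows from a monotonicity argument on $\bh^+\bR\bh$: if $P_T\le\sum_iP_i$ the TP constraint must be active, while if $P_T>\sum_iP_i$ every PA constraint must be active and hence $tr\bR^*=\sum_iP_i$.

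Next I would parametrize $u_i=a_ie^{j\phi_i}$ with $a_i\ge 0$, observe that $\phi_i$ can be chosen equal to the phase of $h_i$ without loss of optimality, and reduce the problem to
\[
\max_{a\ge \bo}\ \sum_i |h_i|\,a_i\quad\text{s.t.}\quad a_i^2\le P_i/P^*,\ \ \sum_ia_i^2=1.
\]
KKT for this scalar problem yields $|h_i|=2(\nu+\eta_i)a_i$ with $\nu,\eta_i\ge 0$. For indices with $\eta_i=0$ this gives the MRT-type form $a_i=c_2|h_i|$ with $c_2=1/(2\nu)$; for indices with $\eta_i>0$ complementary slackness gives $a_i=\sqrt{P_i/P^*}\equiv c_{1i}$. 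Substituting these forms into $\sum_ia_i^2=1$ and using $P_0=\frac{1}{k}\sum_{i=1}^kP_i$ and $m^*=P^*/P_0$ reproduces the stated expression for $c_2$.

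Finally I would identify the active set. Index $i$ carries an active PA constraint iff $\eta_i=|h_i|/(2a_i)-\nu\ge 0$, i.e.\ iff $|h_i|/\sqrt{P_i}\ge 2\nu/\sqrt{P^*}$, so the active indices are precisely those with the largest values of $|h_i|/\sqrt{P_i}$; this is what dictates the reordering in the statement. Imposing that the first inactive index $k+1$ satisfy $c_2|h_{k+1}|\le\sqrt{P_{k+1}/P^*}$ and substituting the expression for $c_2$ yields the threshold inequality, while \eqref{eq.gamma*2} is immediate from $\gamma^*=P^*|\bh^+\bu|^2$. The one genuinely nontrivial point, and the step I expect to be the main obstacle, is to justify the ``least $k$'' stopping rule: one needs to check that if $|h_k|/\sqrt{P_k}$ exceeds the threshold computed with parameter $k-1$, then it also exceeds the threshold computed with parameter $k$. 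A short algebraic manipulation of the identity $|\bh_k^m|_2^2=|h_k|^2+|\bh_{k+1}^m|_2^2$ together with $P_T-\sum_{i=1}^k P_i=(P_T-\sum_{i=1}^{k-1}P_i)-P_k$ delivers this implication, so the decreasing ordering of $|h_i|/\sqrt{P_i}$ combined with the first-crossing rule ensures simultaneously that all $i\le k$ are active and all $i>k$ are inactive, closing the KKT certificate.
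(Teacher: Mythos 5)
Your proposal is correct and follows the same KKT route that the paper itself invokes (its proof of Theorem \ref{thm.C.Pi} is literally ``follows along the same lines as that of Theorem \ref{thm.C}''): rank-one optimality from the rank of $\bM$, phase alignment with $\bh$, a threshold on $|h_i|/\sqrt{P_i}$ separating active from inactive PA constraints, and the first-crossing justification of the least-$k$ rule --- which you rightly flag as the one nontrivial point and resolve with the same telescoping algebra the paper uses in its Lemma 2 and Proposition 1. The only deviations are minor: you replace the paper's closing Cauchy--Schwarz verification of $\bM\ge 0$ by exactly solving the reduced scalar problem (a legitimate shortcut, since KKT necessity plus the rank argument already pins $\bR^*$ to rank one), and for full rigor you should (i) justify $\lambda\bI+\bD\succ 0$ from the diagonal of the stationarity condition, $|h_i|^2+m_{ii}=\lambda+\mu_i>0$, as the paper does, and (ii) relax $\sum_i a_i^2=1$ to $\sum_i a_i^2\le 1$ so that the reduced problem is convex and your sign condition $\nu\ge 0$ is consistent.
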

\begin{proof}
Follows along the same lines as that of Theorem \ref{thm.C}.
\end{proof}

Note that 1st term in \eqref{eq.gamma*2} does not represent EGT anymore; rather, the amplitudes are adjusted to match the PA constraints. The conditions for optimality of the MRT can be similarly obtained. When the TP constraint is inactive, i.e. when $P_T \ge \sum_{i=1}^m P_i$, Theorem 2 reduces to the respective result in \cite{Vu-11}, as it should be. The condition for the optimality of the MRT is as follows.

\begin{cor}
All PA constraints are inactive and thus the MRT is optimal if and only if
\bal
|h_1| \le |\bh|_2 \sqrt{P_1/P_T}
\eal
and at least 1 PA constraint is active otherwise. All PA constraints are active if and only if
\bal
P_T \ge \sum_{i=1}^m P_i
\eal
\end{cor}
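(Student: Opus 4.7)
The plan is to obtain Corollary 3 directly from Theorem \ref{thm.C.Pi} in the same way Corollaries 1 and 2 were obtained from Theorem \ref{thm.C}, namely by specializing the characterization of $k$ (the number of active PA constraints) to the extreme values $k=0$ and $k=m$. Both equivalences should come out essentially by algebra on the threshold inequality, with necessity supplied by the necessity of the KKT conditions that underlies Theorem \ref{thm.C.Pi}.

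For the first equivalence, I would set $k=0$ in the threshold condition
\begin{equation*}
\frac{|h_{k+1}|}{\sqrt{P_{k+1}}}\le\frac{|\bh_{k+1}^m|_2}{\sqrt{P_T-\sum_{i=1}^k P_i}},
\end{equation*}
which reduces to $|h_1|/\sqrt{P_1}\le |\bh|_2/\sqrt{P_T}$, and rearrange to the stated form $|h_1|\le |\bh|_2\sqrt{P_1/P_T}$. The \emph{if} direction is then immediate: when this inequality is satisfied, $k=0$ is the least $k$ meeting the threshold condition in Theorem \ref{thm.C.Pi}, so the optimal covariance in \eqref{eq.R*}--\eqref{eq.ui*} has $a_i=c_2|h_i|$ for all $i$, which is precisely the MRT. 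For the \emph{only if} direction, I would argue that if $|h_1|>|\bh|_2\sqrt{P_1/P_T}$, then $k=0$ fails the threshold test, forcing $k\ge 1$ in Theorem \ref{thm.C.Pi}; since the KKT conditions that produce the least-$k$ solution are necessary for optimality, the MRT covariance cannot be optimal in that case. This also proves the contrapositive ``at least 1 PA constraint is active otherwise.''

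For the second equivalence, I would invoke the clause of Theorem \ref{thm.C.Pi} that explicitly sets $k=m$ when $P_T\ge\sum_{i=1}^m P_i$. In that regime, all PA constraints are active by definition, and the $c_2$-term in the amplitude distribution is absent, so $a_i=c_{1i}=\sqrt{P_i/P^*}$ with $P^*=\sum_i P_i$, i.e.\ every antenna transmits at its individual PA limit (the analog of EGT for heterogeneous constraints). Conversely, if $P_T<\sum_{i=1}^m P_i$, then $P^*=P_T<\sum_i P_i$ forces at least one inactive PA constraint, because having all $m$ constraints active would require $\sum_i r_{ii}=\sum_i P_i>P_T$, violating the TP constraint; so not all PA constraints can be active. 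Hence $k=m$ if and only if $P_T\ge\sum_{i=1}^m P_i$.

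I do not anticipate a serious obstacle: the argument is essentially bookkeeping on Theorem \ref{thm.C.Pi}. The only point requiring mild care is making the necessity direction rigorous, i.e.\ pointing out that the least-$k$ prescription in Theorem \ref{thm.C.Pi} is derived from the KKT conditions and that those conditions are necessary (not merely sufficient) here because the SNR $\bh^+\bR\bh$ is linear in $\bR$ and the constraint set $S_R$ in \eqref{eq.SR.Pi} is convex with a nonempty interior, so Slater's condition holds and KKT is necessary for optimality.
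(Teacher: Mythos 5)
Your proposal is correct and follows essentially the same route as the paper, which proves this corollary (like Corollaries 1 and 2) simply by specializing Theorem \ref{thm.C.Pi} to $k=0$ and $k=m$, with necessity coming from the necessity of the KKT conditions for this convex problem. The algebra on the threshold inequality at $k=0$ and the total-power counting argument for the $k=m$ case both check out.
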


\section{Fading MIMO Channels}
\label{sec.fading.MIMO}

While the closed form solutions for the optimal signaling and the capacity of fixed MISO channels under the joint power constraints have been obtained above, one may wonder whether they can be extended to the MIMO case and whether fading can be included as well, which is of particular importance for modern wireless systems.

In this section, we partially answer this question by considering Gaussian fading MIMO channels of the form
\bal
\label{eq.MIMO.ch}
\by = \bH \bx + \bxi
\eal
where $\bx, \by$ are the transmitted and received (vector) signals, $\bxi$ is the Gaussian i.i.d. noise and $\bH$ is the channel matrix. The entries of this matrix are random variables representing fading channel gains between each transmit and each receive antenna. We assume that the Tx has the channel distribution information only (due to e.g. limitations of the feedback link and the channel estimation mechanism, see e.g. \cite{Biglieri}). A class of ergodic fading distributions will be considered, of which i.i.d. Rayleigh fading is a special case. The following definition characterizes this class.

\begin{defn}
A fading distribution of $\bH$ is right unitary-invariant if $\bH\bU$ and $\bH$ are equal in distribution for any unitary matrix $\bU$ of appropriate size.
\end{defn}

To see a physical motivation behind this definition, observe that i.i.d. Rayleigh fading, where each entry of $\bH$ is i.i.d. complex Gaussian with zero mean, satisfies this condition. A more general class of distributions which fit into this definition can be obtained by considering the popular Kronecker correlation model, see e.g. \cite{Kermoal}, where the overall channel correlation is a product of the independent Tx and Rx parts, which are induced by the respective sets of scatterers (e.g. around the base station and mobile unit), so that the channel matrix is
\bal
\bH = \bR_r^{1/2}\bH_0\bR_t^{1/2}
\eal
where $\bR_r, \bR_t$ are the Rx and Tx end correlations and the entries of $\bH_0$ are i.i.d. complex Gaussian with zero mean. While this model does not fit in general into Definition 1, its special case of no Tx correlation, $\bR_t=\bI$, so that
\bal
\label{eq.MIMO.H0}
\bH = \bR_r^{1/2}\bH_0
\eal
is indeed right unitary-invariant (since $\bH_0$ and $\bH_0\bU$ have the same distribution). Note that this model does allow an (arbitrary) Rx correlation. The uncorrelated Tx end may represent a base station where the antennas are spaced sufficiently widely apart of each other thereby inducing independence, see e.g. \cite{Loyka-02}.

The following Theorem establishes the ergodic capacity of a Gaussian MIMO channel under a right unitary-invariant fading distribution and the joint PA and TP constraints.

\begin{thm}
\label{thm.MIMO.C}
Consider the ergodic-fading MIMO channel as in \eqref{eq.MIMO.ch} for which the fading distribution is right unitary-invariant. Its channel capacity under the joint PA and TP constraints in \eqref{eq.SR.PA.TP} is as follows:
\bal
C = \mathbb{E}_{\bH}\{\ln|\bI + P^* \bH\bH^+|\}
\eal
where $\mathbb{E}_{\bH}$ is the expectation with respect to the fading distribution, $P^*=\min\{P,P_T/m\}$, and the optimal Tx covariance matrix is $\bR^*=P^*\bI$, i.e. isotropic (independent) signaling is optimal.
\end{thm}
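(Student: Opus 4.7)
The plan is to prove Theorem~\ref{thm.MIMO.C} by combining a Haar unitary-averaging argument (of the same flavour as Telatar's original proof for i.i.d.\ Rayleigh fading under a total power constraint) with a feasibility check that accounts for both the PA and TP constraints. Since the channel is Gaussian with Gaussian noise and the Tx has only the channel distribution, Gaussian inputs are optimal, so the capacity reduces to an optimisation over the input covariance:
\begin{align}
C = \max_{\bR \in S_R} I(\bR), \qquad I(\bR) = \mathbb{E}_{\bH}\{\ln|\bI + \bH\bR\bH^+|\},
\end{align}
with $S_R$ as in \eqref{eq.SR.PA.TP}. My first step would be to set this up and to record the two trace implications of feasibility: $\textup{tr}\,\bR \le P_T$ directly from the TP constraint, and $\textup{tr}\,\bR = \sum_i r_{ii} \le mP$ from summing the PA constraints. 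Together these give $\textup{tr}\,\bR / m \le \min(P, P_T/m) = P^*$.

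Next I would invoke the right unitary-invariance through a Haar average. Let $\bU$ be a Haar-distributed random unitary matrix on the $m\times m$ unitary group, independent of $\bH$. For each realisation of $\bU$, the hypothesis gives $\bH\bU \stackrel{d}{=} \bH$, so
\begin{align}
I(\bR) = \mathbb{E}_{\bH}\{\ln|\bI + (\bH\bU)\bR(\bH\bU)^+|\} = \mathbb{E}_{\bH}\{\ln|\bI + \bH(\bU\bR\bU^+)\bH^+|\}.
\end{align}
Taking expectation over $\bU$ as well and swapping the order (Fubini), then applying Jensen's inequality to the log-determinant, which is concave on the positive semi-definite cone, yields
\begin{align}
I(\bR) = \mathbb{E}_{\bU}\mathbb{E}_{\bH}\{\ln|\bI + \bH(\bU\bR\bU^+)\bH^+|\} \le \mathbb{E}_{\bH}\{\ln|\bI + \bH\,\mathbb{E}_{\bU}[\bU\bR\bU^+]\,\bH^+|\}.
\end{align}
I would then apply the standard Haar identity $\mathbb{E}_{\bU}[\bU\bR\bU^+] = (\textup{tr}\,\bR/m)\bI$ (which follows from Schur-type invariance: the averaged matrix commutes with every unitary and so is a scalar multiple of $\bI$, and its trace is preserved).

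Combining the two pieces, $I(\bR) \le \mathbb{E}_{\bH}\{\ln|\bI + (\textup{tr}\,\bR/m)\bH\bH^+|\}$, and monotonicity of $\alpha \mapsto \ln|\bI + \alpha \bH\bH^+|$ together with $\textup{tr}\,\bR/m \le P^*$ gives $I(\bR) \le \mathbb{E}_{\bH}\{\ln|\bI + P^* \bH\bH^+|\}$ for every $\bR \in S_R$. To close the argument I would exhibit $\bR^* = P^*\bI$ as the achieving covariance: its diagonal entries equal $P^* \le P$ and its trace equals $mP^* \le P_T$, so it lies in $S_R$, and it attains the upper bound. This proves both the capacity formula and the optimality of isotropic signalling.

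The main obstacle is not the algebra but ensuring the two structural inputs actually apply here. Right unitary-invariance is exactly the minimal assumption that makes the equality $I(\bR) = \mathbb{E}_{\bH}\{\ln|\bI + \bH(\bU\bR\bU^+)\bH^+|\}$ hold for every deterministic $\bU$, which is what drives the Jensen step; any weaker invariance (e.g.\ only diagonal unitaries, as one might naively expect the PA constraint to enforce) would fail to yield the fully isotropic average. Dually, the PA constraint is what forces the capping of $\textup{tr}\,\bR/m$ at $P$ in addition to $P_T/m$, and it is the definition $P^* = \min(P, P_T/m)$ that simultaneously (i) upper-bounds $\textup{tr}\,\bR/m$ for every feasible $\bR$ and (ii) keeps $\bR^* = P^*\bI$ feasible. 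Everything else (Fubini, concavity of $\log\det$, monotonicity, the Haar identity) is routine.
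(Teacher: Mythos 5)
Your proof is correct, but it takes a genuinely different (and somewhat more streamlined) route than the paper. The paper first uses right unitary-invariance to absorb the eigenvectors of $\bR=\bU\bLam\bU^+$ into the channel, reducing the optimization to the diagonal $\bLam$; it then symmetrizes over the $m!$ \emph{permutations} of the diagonal entries and applies Jensen to conclude $\bLam=(P_T/m)\bI$ is optimal under the TP constraint alone; finally, a separate Part~2 handles the PA constraints by a case analysis ($P\ge P_T/m$ versus $P<P_T/m$, the latter reducing to the former via the redundancy of the TP constraint). You instead symmetrize over the full unitary group with Haar measure, using the identity $\mathbb{E}_{\bU}[\bU\bR\bU^+]=(\mathrm{tr}\,\bR/m)\bI$ and Jensen in one shot, and you dispose of both constraints simultaneously by noting that feasibility forces $\mathrm{tr}\,\bR/m\le\min(P,P_T/m)=P^*$ and then invoking monotonicity of $\alpha\mapsto\ln|\bI+\alpha\bH\bH^+|$. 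The two symmetrizations are logically equivalent here (both require the full right unitary-invariance hypothesis --- the paper already spends it on the diagonalization step, so neither argument is more general), but your continuous average replaces the paper's discrete one, and your trace-plus-monotonicity step eliminates the paper's two-case analysis in Part~2, which is a cleaner way to see why $P^*=\min(P,P_T/m)$ is the right constant. The only points to state explicitly for completeness are the integrability needed for the Fubini swap and the concavity of $\bQ\mapsto\mathbb{E}_{\bH}\{\ln|\bI+\bH\bQ\bH^+|\}$ on the positive semi-definite cone, both of which you correctly flag as routine.
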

\begin{proof}
The proof consists of two parts. In Part 1, we establish the optimality of isotropic signaling under the TP constraint only, while in Part 2, we extend this result to include the PA constraints as well.

\textit{Part 1}: the ergodic capacity under the TP constraint can be presented in the following form:
\bal
\label{eq.MIMO.P.0}
C_1 &= \max_{\bR}\mathbb{E}_{\bH}\{\ln|\bI + \bH\bR\bH^+|\}\\
\label{eq.MIMO.P.1}
&= \max_{\bR}\mathbb{E}_{\bH}\{\ln|\bI + \bH\bU\bLam\bU^+\bH^+|\}\\
\label{eq.MIMO.P.2}
&= \max_{\bR}\mathbb{E}_{\widetilde{\bH}}\{\ln|\bI + \widetilde{\bH}\bLam\widetilde{\bH}^+|\}\\
\label{eq.MIMO.P.3}
&= \max_{\bLam}\mathbb{E}_{\bH}\{\ln|\bI + \bH\bLam\bH^+|\}
\eal
where the maximization is subject to $\bR \ge 0$, $tr\bR \le P_T$. \eqref{eq.MIMO.P.0} is the standard expression for the ergodic MIMO channel capacity, see e.g. \cite{Telatar}\cite{Biglieri}; \eqref{eq.MIMO.P.1} follows from the eigenvalue decomposition $\bR=\bU\bLam\bU^+$, where the columns of unitary matrix $\bU$ are the eigenvectors of $\bR$ and the diagonal matrix $\bLam$ collects the eigenvalues of $\bR$; \eqref{eq.MIMO.P.2} follows from $\widetilde{\bH} = \bH\bU$; \eqref{eq.MIMO.P.3} follows since $\widetilde{\bH}$ and $\bH$ have the same distribution and the constraint $tr\bR =tr\bLam \le P_T$ depends only on the eigenvalues and hence the eigenvectors can be eliminated from the optimization. To proceed further, let
\bal
C(\bLam)= \mathbb{E}_{\bH}\{\ln|\bI + \bH\bLam\bH^+|\}
\eal
and observe that this is a concave function (since $\ln|\cdot|$ is and $\mathbb{E}_{\bH}$ preserves concavity, see e.g. \cite{Boyd-04}). Further observe the following chain inequality:
\bal
\label{eq.MIMO.P.4}
C(\bLam)&= \mathbb{E}_{\bH}\{\ln|\bI + \bH\bLam_{\pi}\bH^+|\}\\
\label{eq.MIMO.P.5}
&= \frac{1}{m!}\sum_{\pi} \mathbb{E}_{\bH}\{\ln|\bI + \bH\bLam_{\pi}\bH^+|\}\\
\label{eq.MIMO.P.6}
&\le \mathbb{E}_{\bH}\left\{\ln\left|\bI + \frac{1}{m!}\sum_{\pi} \bH\bLam_{\pi}\bH^+\right|\right\}\\
\label{eq.MIMO.P.7}
&= \mathbb{E}_{\bH}\{\ln|\bI + P^*\bH\bH^+|\}
\eal
where $\bLam_{\pi}$ is a diagonal matrix whose diagonal entries are a permutation $\pi$ of those in $\bLam$, and $P^*=P_T/m$. \eqref{eq.MIMO.P.4} follows from the fact that a permutation can be represented by a unitary matrix (where each column and each row has all zero entries except for one) and hence $C(\bLam)=C(\bLam_{\pi})$; \eqref{eq.MIMO.P.5} follows since \eqref{eq.MIMO.P.4} holds for any $\pi$ and the total number of permutations is $m!$; the inequality in \eqref{eq.MIMO.P.6} is due to the concavity of $C(\bLam)$; \eqref{eq.MIMO.P.7} follows from $\frac{1}{m!}\sum_{\pi}\bLam_{\pi} = P^*\bI$. Since the inequality in \eqref{eq.MIMO.P.6} becomes equality when $\bLam=P^*\bI$, the optimal signaling under the TP constraint is $\bR^*=P^*\bI$ so that
\bal
C_1= \max_{\bLam} C(\bLam) = \mathbb{E}_{\bH}\{\ln|\bI + P^*\bH\bH^+|\},
\eal
which establishes Part 1.

\textit{Part 2:} consider first the case when $P\ge P_T/m$ and observe that the capacity under the joint constraints $C_2$ cannot exceed that under the TP constraint only, which hence serves as an upper bound: $C_2\le C_1$. Since the TP optimal covariance $\bR^*=P_T\bI/m$ also satisfies the PA constraints (under the assumed condition), it is also optimal under the joint constraints and hence the upper bound is achieved: $C_2=C_1$. If, on the other hand, $P< P_T/m$, observe that the TP constraint is redundant (since, due to the PA constraints, the total power does not exceed $mP<P_T$) and hence the jointly-constrained optimization with $P_T>mP$ is equivalent to the PA-constrained optimization only, which in turn is equivalent to the jointly-constrained optimization with new total power $P_T'=mP$ (since the new TP constraint is also redundant). However, the latter problem is just a special case of $P\ge P_T/m$ considered above, from which the optimality of $\bR^*=P_T'\bI/m=P\bI$ follows.

Combining two parts, it follows that $\bR^*=\min\{P,P_T/m\}\bI$ is optimal under the joint constraints in general and hence $C_1=C_2$. This completes the proof.
\end{proof}

It follows from Theorem \ref{thm.MIMO.C} and its proof that the same capacity expression holds under the TP constraint, the PA constraints and the joint PA and TP constraints (where $P^*$ is defined accordingly). This extends the earlier result in \cite{Telatar} established for i.i.d. Rayleigh fading  and the TP constraint to the class of right unitary-invariant fading distributions (including, as a special case, the semi-correlated model in \eqref{eq.MIMO.H0}) and to the PA as well as the joint PA and TP constraints. Note that the optimal signaling here is isotropic, so that the optimal covariance matrix is full-rank, unlike that in Theorem \ref{thm.C}, which is of rank-1. The importance of isotropic signalling is due to the fact that no channel state or distribution information is needed at the Tx end (and hence the feedback requirements are minimal).

Applying this Theorem to the semi-correlated channel in \eqref{eq.MIMO.H0}, one obtains its ergodic capacity under the joint PA and TP constraints:
\bal
\label{eq.MIMO.C0}
C = \mathbb{E}_{\bH_0}\{\ln|\bI + P^* \bR_r\bH_0\bH_0^+|\}
\eal
Unlike the fixed MISO case, the optimal signaling here is isotropic, $\bR^*=P^*\bI$,  and hence independent of the Rx correlation $\bR_r$. However, the capacity does depend on $\bR_r$ but, as it follows from \eqref{eq.MIMO.C0}, $C$ depends on the eigenvalues of $\bR_r$ only, not on its eigenvectors:
\bal\notag
C &= \mathbb{E}_{\bH_0}\{\ln|\bI + P^*\bU_r\bLam_r\bU_r^+\bH_0\bH_0^+|\}\\
&= \mathbb{E}_{\tilde{\bH}_0}\{\ln|\bI + P^*\bLam_r\tilde{\bH}_0\tilde{\bH}_0^+|\}\\ \notag
&= \mathbb{E}_{\bH_0}\{\ln|\bI + P^*\bLam_r\bH_0\bH_0^+|\}
\eal
where $\bR_r=\bU_r\bLam_r\bU_r^+$ is the eigenvalue decomposition, and $\tilde{\bH}_0 = \bU_r^+\bH_0$. The last equality is due to the fact that $\tilde{\bH}_0$ and $\bH_0$ are equal in distribution. Hence, different $\bR_r$ induce the same capacity provided that they have the same eigenvalues.

These properties are ultimately due to the right unitary invariance of the fading process. It can be further shown (by examples) that Theorem \ref{thm.MIMO.C} does not hold in general if fading distribution is not right unitary-invariant: e.g. consider $\bR_t=diag\{1,0,...,0\}$ for which the optimal covariance can be shown to be $\bR^*=diag\{\min(P_T, P),0,...,0\}$ (i.e. all the power is allocated to the only non-zero eigenmode of $\bR_t$).

\section{Conclusion}

The Gaussian MISO channel has been considered under the joint total and per-antenna power constraints. Its capacity as well as the optimal transmission strategy have been established in closed-form, thus extending earlier results established under individual constraints only or, in the case of joint constraints, for 2 Tx antennas only. It is interesting to observe that the optimal transmission strategy is hybrid, i.e. a combination of equal gain (for stronger antennas) and maximum-ratio (for weaker antennas) transmission strategies. If the variance of channel gains across antennas is not too large, the maximum ratio transmission is optimal and individual power constraints are not active. Finally, the above results have been extended to the MIMO case by establishing the ergodic capacity of fading MIMO channels under the joint power constraints when the fading distribution is right unitary invariant, which includes, as special cases, i.i.d. and semi-correlated Rayleigh fading. The optimal signaling in this case has been shown to be isotropic and hence the feedback requirements are minimal.

\section{Acknowledgement}

The author is grateful to R.F. Schaefer for insightful discussions, and to V.I. Mordachev and T. Griken for their support.

\section{Appendix}

\subsection{Proof of Theorem \ref{thm.C}}

The problem in \eqref{eq.C.def} under the constraints in \eqref{eq.SR.PA.TP} is convex (since the objective is affine and the constraints are affine and positive semi-definite). Since Slater's condition holds, KKT conditions are sufficient for optimality \cite{Boyd-04}. The Lagrangian for this problem is:
\bal
L = -\bh^+\bR\bh +\gl(tr\bR-P_T) + \sum_i\gl_i(r_{ii}-P) - tr\bM\bR
\eal
where $\gl, \gl_i \ge 0$ are Lagrange multipliers responsible for the total and per-antenna power constraints, and $\bM\ge 0$ is (matrix) Lagrange multiplier responsible for the positive semi-definite constraint $\bR\ge 0$. The KKT conditions are
\bal
\label{eq.KKT.1}
\nabla_R L = -\bh\bh^+ +\gl\bI -\bM +\bLam=0\\
\label{eq.KKT.2}
\gl(tr\bR-P_T)=0,\ \gl_i(r_{ii}-P)=0,\ \bR\bM=0\\
\label{eq.KKT.3}
tr\bR \le P_T,\ r_{ii}\le P,\\
\label{eq.KKT.4}
\bM\ge 0,\ \gl_i\ge 0
\eal
where $\nabla_R$ is the derivative with respect to $\bR$ and $\bLam=diag\{\gl_1...\gl_m\}$ is a diagonal matrix collecting $\gl_i$; \eqref{eq.KKT.1} is the stationarity condition, \eqref{eq.KKT.2} are complementary slackness conditions; \eqref{eq.KKT.3} and \eqref{eq.KKT.4} are primal and dual feasibility conditions.

Combining both inequalities in \eqref{eq.KKT.3}, one obtains:
\bal
tr\bR \le \min(P_T,mP)=P^*
\eal
and from \eqref{eq.KKT.1}
\bal
\bh\bh^+ +\bM = \gl\bI +\bLam > 0
\eal
where the last inequality is due to the diagonal part of the equality:
\bal
\label{eq.gl+gli>0}
|h_i|^2+m_{ii} = \gl +\gl_i >0
\eal
since $m_{ii} \ge 0$ and $|h_i|>0$. Therefore, $\bh\bh^+ +\bM$ is full-rank, $r(\bh\bh^+ +\bM)=m$. Since $r(\bh\bh^+)=1$ and $\bM\ge 0$, it follows that $r(\bM)\ge m-1$. Since $r(\bM)=m$ implies $\bM>0$ and hence $\bR=0$ - clearly not an optimal solution, one concludes that $r(\bM)=m-1$ and hence $r(\bR)=1$ (this follows from complementary slackness $\bM\bR=0$), i.e. beamforming is optimal:
\bal
\bR^*=P^* \bu\bu^+
\eal
where $|\bu|=1$. It remains to find the beamforming vector $\bu$. To this end, combining the last equation with $\bM\bR=0$, one obtains:
\bal
0 = \bM\bu = -\bh^+\bu \bh + (\bLam+\gl\bI)\bu
\eal
from which it follows that
\bal
u_i = \bh^+\bu h_i/(\gl+\gl_i)
\eal
and hence
\bal
\phi_{ui} = \phi_i + \varphi = \phi_i
\eal
where $\phi_{ui}, \varphi$ are the phases of $u_i$ and $\bh^+\bu$; since the common phase $\varphi$ does not affect $\bR$ or the SNR, one can set $\varphi=0$ without loss of generality to obtain
\bal
\label{eq.ui.1}
u_i = a h_i/(\gl+\gl_i)
\eal
where $a=|\bh^+\bu|$.

If $\gl_i>0$ (active $i$-th per-antenna constraint), then $r_{ii}=P^*|u_i|^2=P$ from \eqref{eq.KKT.2} and \eqref{eq.R*} so that
\bal
|u_i|=c_1=1/\sqrt{m^*}
\eal
Since $\gl_i>0$, using \eqref{eq.ui.1},
\bal
c_1= |u_i| = a |h_i|/(\gl+\gl_i) < a |h_i|/\gl
\eal
so that
\bal
\label{eq.h1-k}
|h_i| > \gl c_1/a = h_{th}
\eal
where $h_{th}$ is a threshold channel gain, i.e. PA constraints are active for all sufficiently strong channels.

When $\gl_i=0$ (inactive $i$-th PA constraint) for at least one $i$, it follows from \eqref{eq.gl+gli>0} that $\gl>0$, i.e. the TP constraint is active: $tr\bR=P_T$, which implies $P_T \le mP$. One obtains from \eqref{eq.ui.1} in this case
\bal
\label{eq.ui.2}
u_i=c_2 h_i,\ c_2=a/\gl
\eal
which, when combined with the PA constraint $r_{ii} = P_T|u_i|^2 \le P$, requires
\bal
\label{eq.h-k+1}
|h_i| \le h_{th}
\eal
where $c_2$ can be found from the TP constraint $|\bu|^2=1$:
\bal
|\bu|^2 = k c_1^2 + c_2^2|\bh_{k+1}^m|^2=1
\eal
and $k<m$ is the number of active PA constraints, i.e. when \eqref{eq.h1-k} holds, which implies
\bal
c_2 = \frac{\sqrt{1-k/m^*}}{|\bh_{k+1}^m|_2}
\eal
so that $k\le m^*$ and $h_{th}$ can be expressed as
\bal
\label{eq.hth.2}
h_{th} = \frac{\gl c_1}{a} = \frac{c_1}{c_2} = \frac{|\bh_{k+1}^m|_2}{\sqrt{m^*-k}}
\eal
If $k=m$, i.e. all PA constraints are active, then one can take $h_{th}=0$ for consistency with \eqref{eq.h1-k}. This implies $P_T\ge mP$ so that $m^*=m$ (note that \eqref{eq.hth.2} is not defined in this case).

To find the number $k$ of active PA constraints when $P_T < mP$, so that $m^*=P_T/P < m$ and hence $k\le m^* <m$, observe that \eqref{eq.h1-k} and \eqref{eq.ui.2} imply
\bal
\label{eq.h1-k.2}
|h_{k}|\sqrt{m^*-k} >  |\bh_{k+1}^m|_2
\eal
while \eqref{eq.h-k+1} implies
\bal
\label{eq.h-k+1.2}
|h_{k+1}|\sqrt{m^*-k} \le |\bh_{k+1}^m|_2
\eal
both due to the ordering $|h_1|\ge|h_2|\ge ..\ge |h_m|$, so that $k$ has to satisfy both inequalities simultaneously.

The next step is to show that there exists unique $k$ that satisfies both inequalities. First, we show that there is at least one solution of \eqref{eq.h-k+1.2}.

\begin{lemma}
\label{lem.k.1}
There exists at least one solution $k$, $0\le k\le m^*$, of \eqref{eq.h-k+1.2}.
\end{lemma}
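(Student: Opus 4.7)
The plan is to give a direct construction rather than an abstract existence argument: I would exhibit the explicit candidate $k = \lfloor m^* \rfloor$ and verify that it satisfies \eqref{eq.h-k+1.2}. Since Lemma \ref{lem.k.1} is invoked in the regime $P_T < mP$ (the only regime where the threshold rule \eqref{eq.hth} is non-trivial), we have $m^* = P_T/P < m$, and hence $k \le m-1$, which guarantees both that the index $k+1 \le m$ is a valid antenna index and that the truncated vector $\bh_{k+1}^m$ is nonempty; clearly $k \ge 0$ as well, and $k \le m^*$ by construction.

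Two elementary bounds then close the argument. First, by definition of the floor, $m^* - k \in [0,1)$, and hence $\sqrt{m^* - k} \le 1$, with the factor vanishing when $m^*$ happens to be an integer. Second, the $l_2$-norm dominates the magnitude of any single component, so
\bal
|\bh_{k+1}^m|_2 = \sqrt{\sum_{i=k+1}^m |h_i|^2} \ge |h_{k+1}|.
\eal
Chaining these,
\bal
|h_{k+1}|\sqrt{m^* - k} \le |h_{k+1}| \le |\bh_{k+1}^m|_2,
\eal
which is precisely \eqref{eq.h-k+1.2} at $k = \lfloor m^* \rfloor$.

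I do not anticipate any real obstacle. The only subtlety worth mentioning is the integer case $m^* \in \mathbb{Z}$, where $k = m^*$ makes the left-hand side identically zero and the inequality is trivial. An alternative route via monotonicity---tracking how $|h_{k+1}|\sqrt{m^*-k}$ shrinks and $|\bh_{k+1}^m|_2$ evolves as $k$ grows, exploiting the ordering $|h_1|\ge\cdots\ge|h_m|$---would also succeed and may align better with a companion uniqueness statement, but the direct construction above is the most transparent proof of pure existence and avoids any case analysis over the channel gains.
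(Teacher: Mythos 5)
Your proof is correct and takes essentially the same route as the paper: the paper's own argument is simply to exhibit $k=\lfloor m^*\rfloor$ as a solution when $m^*<m$ (and $k=m$ with the convention $h_{m+1}=0$ when $m^*=m$), and you supply the short verification---$\sqrt{m^*-\lfloor m^*\rfloor}\le 1$ together with $|h_{k+1}|\le|\bh_{k+1}^m|_2$---that the paper asserts without detail. Your restriction to the regime $P_T<mP$ is legitimate since that is the only case in which the threshold inequality is invoked, though the paper's version also covers $m^*=m$ for completeness.
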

\begin{proof}
If $m^*= m$, then $k=m$ clearly solves it, where we take $h_{m+1}=0$ for consistency (recall that all channels with 0 gain do not affect the capacity). If $m^* < m$, then $k= \lfloor m^* \rfloor$ solves it.
\end{proof}

The next Lemma shows that, in general, a solution is not unique.

\begin{lemma}
If $k\le \lfloor m^* \rfloor$ satisfies \eqref{eq.h-k+1.2}, then all $k'$ such that $k \le k' \le \lfloor m^* \rfloor$ also satisfy it, i.e. a solution is not unique in general. Likewise, all $k' \le k$ solve \eqref{eq.h1-k.2} if $k$ solves it.
\end{lemma}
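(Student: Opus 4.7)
The plan is to prove both claims by induction (a one-step argument), exploiting the ordering $|h_1|\ge|h_2|\ge\cdots\ge|h_m|$ together with the identity $|\bh_k^m|_2^2 = |h_k|^2 + |\bh_{k+1}^m|_2^2$. Both inequalities have a simple monotone structure once one moves a single term between the two sides.

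For the first claim, I would start from $|h_{k+1}|\sqrt{m^*-k}\le|\bh_{k+1}^m|_2$, square it, and subtract $|h_{k+1}|^2$ from both sides to get $|h_{k+1}|^2(m^*-k-1)\le|\bh_{k+2}^m|_2^2$. Since $|h_{k+2}|\le|h_{k+1}|$ and $m^*-k-1\ge 0$ (because $k+1\le\lfloor m^*\rfloor$), this immediately yields $|h_{k+2}|^2(m^*-k-1)\le|\bh_{k+2}^m|_2^2$, i.e.\ inequality \eqref{eq.h-k+1.2} at index $k+1$. Iterating up to $\lfloor m^*\rfloor$ gives the first statement.

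For the second claim, starting from $|h_k|^2(m^*-k)>|\bh_{k+1}^m|_2^2$, I would add $|h_k|^2$ to both sides to obtain $|h_k|^2(m^*-k+1)>|h_k|^2+|\bh_{k+1}^m|_2^2 = |\bh_k^m|_2^2$. Then, using $|h_{k-1}|\ge|h_k|$, I conclude $|h_{k-1}|^2(m^*-k+1)>|\bh_k^m|_2^2$, which is exactly \eqref{eq.h1-k.2} at index $k-1$. Iterating downwards to $k'=0$ (using the convention that the left-hand side is non-negative and the right-hand side decreases) gives the second statement.

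There is no serious obstacle here; the only minor point to be careful about is the feasibility condition $m^*-k-1\ge 0$ in the first part, which is guaranteed by the hypothesis $k+1\le\lfloor m^*\rfloor$, and the handling of the boundary case in which some of the lower $|h_i|$ may be zero (which is absorbed without change into the identity for $|\bh_k^m|_2$). The argument is essentially a one-line telescoping of the squared inequality via $|\bh_k^m|_2^2 = |h_k|^2 + |\bh_{k+1}^m|_2^2$.
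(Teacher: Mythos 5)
Your proposal is correct and follows essentially the same route as the paper's own proof: a one-step induction that subtracts (or adds) the single term $|h_{k+1}|^2$ (resp.\ $|h_k|^2$) from both sides of the squared inequality and then uses the ordering $|h_{k+2}|\le|h_{k+1}|$ (resp.\ $|h_{k-1}|\ge|h_k|$). No gaps; the feasibility remark $m^*-k-1\ge 0$ is the same implicit point the paper relies on.
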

\begin{proof}
Let \eqref{eq.h-k+1.2} to hold for $k<\lfloor m^* \rfloor$, so that
\bal
|h_{k+1}|^2(m^*-k) \le  |h_{k+1}|^2+..+|h_{m}|^2
\eal
and hence
\bal\notag
|h_{k+2}|^2(m^*-(k+1))&\le |h_{k+1}|^2(m^*-(k+1))\\
 &\le  |h_{k+2}|^2+..+|h_{m}|^2
\eal
i.e. \eqref{eq.h-k+1.2} also holds for $k'=k+1$. By induction, it holds for all $k\le k' \le \lfloor m^* \rfloor$. To prove 2nd claim, note that it follows from \eqref{eq.h1-k.2} that
\bal\notag
|h_{k-1}|^2(m^*-k+1)&\ge |h_{k}|^2(m^*-k+1) \\
&>  |h_{k}|^2+..+|h_m|^2
\eal
\end{proof}

Finally, we show that a unique $k$ satisfying both inequalities does exist.

\begin{prop}
There exists a unique solution of \eqref{eq.h1-k.2} and \eqref{eq.h-k+1.2}, which is also the least solution of \eqref{eq.h-k+1.2}.
\end{prop}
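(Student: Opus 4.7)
The plan is to verify that the least solution of \eqref{eq.h-k+1.2}, call it $k^*$, automatically satisfies \eqref{eq.h1-k.2}, and then use the order-theoretic structure established in Lemma 2 to exclude every other candidate. By Lemma \ref{lem.k.1}, at least one such $k^*$ exists with $0\le k^*\le \lfloor m^*\rfloor$, so $k^*$ is well-defined.

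First I would show that $k^*$ satisfies \eqref{eq.h1-k.2}. If $k^*=0$, the inequality is vacuous (there is no $|h_0|$), so nothing has to be checked. If $k^*\ge 1$, then by minimality $k^*-1$ does \emph{not} satisfy \eqref{eq.h-k+1.2}, i.e.
\bal
|h_{k^*}|^2 (m^* - k^* + 1) > |\bh_{k^*}^m|_2^2 = |h_{k^*}|^2 + |\bh_{k^*+1}^m|_2^2.
\eal
Subtracting $|h_{k^*}|^2$ from both sides yields $|h_{k^*}|^2(m^*-k^*) > |\bh_{k^*+1}^m|_2^2$, which is precisely \eqref{eq.h1-k.2} evaluated at $k=k^*$. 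Hence $k^*$ is a solution of the joint system.

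For uniqueness, suppose some $k'\ne k^*$ also satisfies both inequalities. By the first claim of Lemma 2 the set of solutions of \eqref{eq.h-k+1.2} is upward closed in $\{0,\ldots,\lfloor m^*\rfloor\}$, so in particular $k'\ge k^*$. If $k'>k^*$, then $k'-1\ge k^*$ still satisfies \eqref{eq.h-k+1.2}, giving
\bal
|h_{k'}|^2(m^*-k'+1)\le |\bh_{k'}^m|_2^2 = |h_{k'}|^2 + |\bh_{k'+1}^m|_2^2,
\eal
and hence $|h_{k'}|^2(m^*-k')\le |\bh_{k'+1}^m|_2^2$. This directly contradicts the assumption that $k'$ satisfies \eqref{eq.h1-k.2}. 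Therefore $k'=k^*$, proving uniqueness and identifying the joint solution with the least solution of \eqref{eq.h-k+1.2}.

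The argument is essentially routine once Lemma 2 is in hand; the only real subtlety, and what I view as the main thing to be careful about, is the boundary convention at $k^*=0$ (where \eqref{eq.h1-k.2} must be read as vacuous) and at $k^*=m$ (where the convention $h_{m+1}=0$ from Lemma \ref{lem.k.1} is used). Both are handled by the same "negate \eqref{eq.h-k+1.2} at $k-1$ and rearrange" identity that drives Steps 1 and 2, so no separate calculation is needed.
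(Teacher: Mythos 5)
Your proof is correct and follows essentially the same route as the paper's: the key step in both is that minimality of $k^*$ forces \eqref{eq.h-k+1.2} to fail at $k^*-1$, and subtracting $|h_{k^*}|^2$ from that failed inequality yields \eqref{eq.h1-k.2} at $k=k^*$. Your explicit uniqueness argument (ruling out any joint solution $k'>k^*$ by applying the same rearrangement at $k'-1$) is a useful addition, since the paper's own proof only writes out the existence half and leaves uniqueness implicit.
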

\begin{proof}
Note, from Lemma \ref{lem.k.1}, that a least solution $k'$ of \eqref{eq.h-k+1.2} exists, so that the following holds
\bal
|h_{k'+1}|^2(m^*-k')\le |h_{k'+1}|^2+..+|h_m|^2\\
|h_{k'}|^2(m^*-k+1)> |h_{k'}|^2+..+|h_m|^2
\eal
where the last inequality is due to the fact that $k'$ is the least solution; this inequality implies
\bal
|h_{k'}|^2(m^*-k)> |h_{k'+1}|^2+..+|h_m|^2
\eal
i.e. \eqref{eq.h1-k.2} holds for $k=k'$.
\end{proof}

It remains to show that $\bM \ge 0$ (dual feasibility). To this end, note that this is equivalent to $\bx^+\bM\bx \ge 0\ \forall \bx$. It follows from \eqref{eq.KKT.1}, \eqref{eq.ui.1}, \eqref{eq.h1-k}, \eqref{eq.ui.2} and Caushy-Schwarz inequality that
\bal
\bx^+\bM\bx &= -|\bh^+\bx|^2 +\gl|\bx|^2 + \bx^+\bLam\bx\\
&\ge -|\bh^+\bx|^2 + \frac{a}{c_1} \sum_{i=1}^m |h_i||x_i|^2 \\
&\ge -|\bh^+\bx|^2 + |h_1| \sum_{i=1}^m |h_i||x_i|^2 \\
&\ge -|\bh^+\bx|^2 + \sum_{i=1}^m |h_i|^2|x_i|^2 \ge 0
\eal
This completes the proof.

\end{document}